\newtheorem{theorem}{Theorem}[section]
\newcolumntype{M}[1]{>{\centering\arraybackslash} p{#1}}
\begin{document}

\title{Modelling the Delay Distribution of Binary Spray and Wait Routing Protocol}

\numberofauthors{2} 
\author{
\alignauthor 
R\'emi DIANA\\
       \affaddr{Univ. de Toulouse, CNES, Thales Alenia Space, France}\\
       \email{remi.diana@isae.fr}
\alignauthor 
Emmanuel LOCHIN\\
       \affaddr{Univ. de Toulouse, ISAE, TeSA, Toulouse, France}\\
       \email{emmanuel.lochin@isae.fr}
}
\date{18 November 2011}

\maketitle

\begin{abstract}
This article proposes a stochastic model to obtain the end-to-end delay law between two nodes of a Delay Tolerant Network (DTN). We focus on the commonly used Binary Spray and Wait (BSW) routing protocol and propose a model that can be applied to homogeneous or heterogeneous networks (i.e. when the inter-contact law parameter takes one or several values). To the best of our knowledge, this is the first model allowing to estimate the delay distribution of Binary Spray and Wait DTN protocol in heterogeneous networks. We first detail the model and propose a set of simulations to validate the theoretical results.
\end{abstract}

\begin{keywords}
DTN routing, modelling, binary spray and wait
\end{keywords}

% ###################################################################################################
\section{Introduction}

Delay Tolerant Networks (DTN) is a concept initially created for interplanetary networks \cite{IPN}. However, it also receives a great success for intermittently connected networks and particularly 
for opportunistic networks \cite{ccr08}.
In these networks, a node can send data to another if both are in the same transmission range. Due to the dynamic character of these networks, there is no guarantee that a direct connected path from a given source to a given 
destination exists at any time. As a result, routing protocols using relay nodes and replication such as MaxProp \cite{maxprop}, Spray and Wait \cite{sw}, PRoPHET \cite{prophet} and RAPID \cite{rapid} have been proposed to increase the message delivery ratio over such intermittently connected networks.

The performance evaluation of relay protocols in terms of message delivery ratio, end-to-end delivery delay or throughput is a difficult task due to the complexity to drive mobile network simulations. Several efforts have been done in order to assess the performance of routing schemes with simulations. Today, The ONE simulator became a referent tool in this area \cite{theone}. Other approaches have proposed Markovian and ordinary differential equations (ODEs) models to study the performance of some 
basic routing protocols such as Epidemic, Epidemic limited, 2-hop routing and 2-hop limited routing protocols \cite{model_epidemic_ode}, \cite{model_epidemic_2hop} while others focus on the ressource constraints 
issues in these networks \cite{model_energy}, \cite{model_energy2}. However, all these models do not consider both Binary Spray and Wait (BSW) routing protocol and different inter-contact law parameters (called in this study heterogeneous case).

In this paper, we introduce a Markovian model to obtain the end-to-end delivery delay law and the average delivery ratio of an intermittently connected network. Compare to previous existing works, we propose to fill a gap by introducing a 
model of the commonly-used Binary Spray and Wait routing protocol in both homogeneous and heterogeneous cases. Indeed, in most DTN routing studies, this protocol is used as a reference for comparison purpose as BSW has been proved 
to be optimal in a fully random network \cite{sw}. To the best of our knowledge this is the first model proposed for BSW performances. Section \ref{sec:assumptions} presents and justifies the assumptions chosen and sums up
the notations used inside this paper. In Section \ref{sec:BSW_model_homo}, we first propose a BSW model for the homogeneous case. This model is then extended to handle heterogeneous networks in Section \ref{sec:BSW_model_hetero}.
In each section we provide examples to assess the consistency and efficiency of the developed model and compare the results obtained with The ONE simulator. Section \ref{sec:conc} concludes this work and details the future work.%Section \ref{sec:future_work} is devoted to the future work while.

%In our model several resolution steps appears. Even if all these steps can be written literally, their results are long and complex and will not be fully written here. We chose for the sack of understandability and clarity to give all the aspects of resolution and a complete methodology to handle every particular cases but only numerical results.

% ###################################################################################################
\section{Assumptions and notations}
Before presenting the assumptions used to build our model, we first recall how the BSW routing protocol operates. 

The source node of a message initially starts with a fixed number of copies denoted $L$. This number is called the replication factor. Then, the spray phase 
is directed by the following rule: any node that has strictly more than one message copy (source or relay) gives half of its copies 
to the first node (without copies) encountered. When a node has only one copy, it switches to the wait phase and give its copy to and only to the destination.

\subsection{Assumptions}
\label{sec:assumptions}

Our model is based on two main assumptions:
\begin{enumerate}
\item the model does not consider buffer constraints (i.e. losses resulting from congestion) and losses due to link failure. That means that we model a case where each contact is long enough to send and/or receive all required messages. Note that the case of congestion is discussed later in Section \ref{sec:conc}; 
\item we consider all inter-contact laws as exponential. Following \cite{dual_law}, the authors show that the time scale of interest for opportunistic forwarding may be of the same order as the characteristic time, and thus the exponential tail is important. As a result, the exponential distribution of inter-contact is meaningful and justifies a Markovian model. In this paper, the authors also claim that the choice of a power law (as proposed in \cite{powerlaw}) in these cases leads to pessimistic results. The use of exponential laws is clearly justified, however it would be interesting to qualify and quantify the error done with such an assumption in a case of network characterized by different inter-contact laws. This problem will be tackled in a future work.
\end{enumerate}

\subsection{Notations}
\label{sec:notations}

We consider a network with $N$ nodes, noted $n_i, \ i \in \lbrace 1,..,N \rbrace$. $ \forall (i,j) \in \lbrace 1,..,N \rbrace \times \lbrace 1,..,N \rbrace, \ i\neq j$, the inter-contact law between $n_i$ and $n_j$ is an exponential law of parameter $\lambda_{i,j} = \lambda_{j,i}$. In our study, we also consider homogeneous networks that means $ \forall (i,j) \in \lbrace 1,..,N \rbrace \times \lbrace 1,..,N \rbrace, \ i\neq j, \ \lambda_{i,j}=\lambda$. Thus, there is only one parameter: $\lambda$. Previous notations are summed up in Table \ref{tab_notation}.

\begin{table}[h]
	\caption{Notations used for homogeneous and heterogeneous models.}
	\begin{center}
	\begin{tabular}{|c|c|} 
		\hline Notation & Definition \\
		\hline N & amount of nodes in the network \\
		\hline $i$ & index of nodes \\
		\hline $n_i$ & $i^{th}$ node of the network \\
		\hline \multirow{2}{*}{$\lambda_{i,j}$} & parameter of the exponential inter-contact \\
											   &  law between nodes $n_i$ and $n_j$\\
		\hline \multirow{2}{*}{$\lambda$} & for homogeneous networks all inter-contact \\
										 & laws have the same parameter  \\
		\hline $L=2^k$ & replication factor of BSW routing protocol \\
		\hline
	\end{tabular}
	\label{tab_notation}
	\end{center}
\end{table}

\section{Model of Binary Spray and Wait routing protocol for homogeneous network}
\label{sec:BSW_model_homo}

The model is done in two parts. First, we build a Markov chain representing the dissemination of copies in the network with an absorbing state corresponding to the delivery of the message. Then, we apply the first hitting time theorem \cite{hitting} between the initial state representing the creation of the message by the source and the absorbing state. This theorem gives the distribution of time needed to reach the absorbing state starting from the first state. In other words, this corresponds to the end-to-end delay between a given source and destination. The main issue is to create a Markov chain that represents the BSW routing protocol. 

In the following, we consider that each node can be in contact with all other nodes with an identical inter-contact law parameter. We qualify this network as homogeneous.

\subsection{Markov Chain for homogeneous cases}

We define a state of the Markov chain as a possible repartition of messages in the network. For example, a possible repartition for a replication factor of $8$ can be: one node with $4$ copies, one node with $2$ copies and two nodes with $1$ copy. We consider that the number of replicates is a power of two, $2^k$. However, the methodology described in the rest of the paper is easily adaptable to any replication factor $L$.

\begin{theorem}
\emph{Number of states in the Markov Chain} \\
\label{nb_state_theo}

In a $N$-node homogeneous DTN, using Binary Spray and Wait routing protocol with a replication factor $L=2^k$, the number of states is:
$$N_{states} = \beta(k) + 1$$ 
with $\beta(k)$ the number of partitions of $2^k$ into powers of 2.

\end{theorem}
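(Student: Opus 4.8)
The plan is to exhibit a bijection between the non-absorbing states of the Markov chain and the partitions of $2^k$ into powers of two, and then to account separately for the single absorbing state. First I would fix the description of a state. Because the network is homogeneous, the nodes are statistically indistinguishable, so a repartition of messages is fully captured by \emph{how many} copies each carrier holds rather than by \emph{which} nodes hold them. Thus a non-absorbing state is an unordered multiset $\{c_1,\dots,c_m\}$ of positive integers, where $c_i$ is the number of copies carried by one of the $m$ nodes currently holding at least one copy. Since copies are neither created nor destroyed during the spray phase, this multiset always satisfies $\sum_{i=1}^{m} c_i = 2^k$.

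The key step is the invariant that every $c_i$ is a power of two. I would prove this by induction on the number of contact (spray) events that have occurred since the message was created. At the start a single node (the source) holds $2^k$ copies, which is a power of two. The only transition allowed in the spray phase takes a node holding $2^j$ copies with $j\ge 1$ and replaces it by two nodes each holding $2^{j-1}$ copies, while a node that already holds $2^0=1$ copy performs no spray. This operation preserves the property that every copy count is a power of two, so by induction it holds in every reachable state. Consequently each non-absorbing state is a multiset of powers of two summing to $2^k$, i.e. a partition of $2^k$ into powers of two.

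To turn this inclusion into an equality of cardinalities I would establish the converse, namely that every partition of $2^k$ into powers of two is realized by some reachable state. Given such a partition, I would reconstruct a sequence of sprays starting from the initial configuration $\{2^k\}$: repeatedly split a part that is larger than its target until the multiset of copy counts matches the partition, a procedure that clearly terminates. This shows that the set of non-absorbing states coincides exactly with the set of partitions of $2^k$ into powers of two, whose cardinality is $\beta(k)$ by definition. Finally, the model lumps all configurations in which the destination has already received the message into a single absorbing delivery state, contributing exactly one further state, so that $N_{states}=\beta(k)+1$.

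The main obstacle I expect is the completeness (reachability) direction together with an implicit hypothesis on the network size: the finest partition $1+\dots+1$ requires $2^k$ distinct nodes to hold one copy simultaneously, so the correspondence is exact only when $N$ is large enough (at least $2^k$ relays together with the destination) to accommodate the most fragmented repartition. I would make this assumption explicit, after which the only delicate point is to check that the splitting procedure can always be carried out; combined with homogeneity, which makes the multiset of copy counts a complete description of a state, this yields the exact one-to-one correspondence and hence the stated count.
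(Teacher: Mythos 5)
Your proof is correct and follows essentially the same route as the paper: identify non-absorbing states with partitions of $2^k$ into powers of two (using indistinguishability of nodes in the homogeneous case) and add one absorbing delivery state. You are in fact more careful than the paper's own argument, since you prove both the invariant and the reachability direction of the bijection and make explicit the requirement $N>L$ that the paper only notes as a remark after the theorem.
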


\begin{proof}
A state corresponds to a particular repartition of copies into the network. A forwarding node, according to BSW protocol, gives half of its copies until it finally gets only one. Thus, each node can have a number of copy in $\{1,2,..,2^k\}$. Moreover, we do not need to discriminate the nodes between them since we consider an homogeneous network. Thus, the number of different possible repartition is the number of partitions of $2^k$ into powers of 2 denoted $\beta(k)$. As we focus on the delay of the first copy reaching the destination, we add an absorbing state which represents the final delivery of the copy of a message. Thus, the number of states is $\beta(k)+1$.
\end{proof}

We provide in Table \ref{nb_states_array} the number of states for different values of $L$. We remark that these results are true for $L<N$. 

\begin{table}[h]
	\caption{Value of $\beta$ sequence and corresponding number of states as a function of $L$}
	\begin{center}
	\begin{tabular}{|M{2cm}|M{2cm}|M{2cm}|} %{|p{2 cm}|p{2 cm }|p{2 cm}|} 
		\hline $L$ & $\beta(k)$ &  $N_{states}$ \\
		\hline   2 &   2 &   3 \\
		\hline  4 &  4 &  5 \\
		\hline  8 &  10 &  11 \\
		\hline  16 &  36 &  37 \\
		\hline  32 &  202 &  203 \\
		\hline  64 &  1828 &  1829 \\
		\hline  128 &  27339 &  27339 \\
		\hline  256 &  692004 &  692005 \\
		\hline
	\end{tabular}
	\label{nb_states_array}
	\end{center}
\end{table}

%\begin{tabular}{ |l|p{3cm}|l|p{5cm}| }
%  \hline
%  label 1 & \centering  label 2 & label 3 & label 4 \\
%  \hline
%  item 1  & \centering  item 2  & item 3  & item 4  \\
%  \hline
%\end{tabular}

\subsubsection{Transition in the chain}

We have computed the number of states in the Markov chain. We now have to detail how to compute the transition parameters. 

\begin{theorem}
\emph{Minimum of $n$ exponential laws} \\
\label{min_n_exp_theo}
Let $\{X_i\}_{i \in \{1,..,n\}}$ be $n$ random variables following exponential laws of respective parameter $\lambda_{X_i}$.\\
Let $Z=Min_{i\in\{1,..,n\}}X_i$. 
Then, $Z$ is a random variable following an exponential law of parameter $\lambda_Z= \sum_{i=1}^n \lambda_{X_i}$.
\end{theorem}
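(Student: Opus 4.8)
The plan is to prove this classical fact about the minimum of independent exponential random variables directly from the definition of the cumulative distribution function, exploiting the well-known memoryless/product structure of exponentials under the tail probability.

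First I would reduce the statement about the minimum to a statement about tail probabilities. The key observation is that $Z = \min_i X_i$ exceeds a value $z$ if and only if \emph{every} $X_i$ exceeds $z$. This turns the minimum, which is awkward to handle directly, into an intersection of events that factorises under independence:
\begin{equation*}
\Pr(Z > z) = \Pr\left(\bigcap_{i=1}^{n} \{X_i > z\}\right) = \prod_{i=1}^{n} \Pr(X_i > z).
\end{equation*}
Here I am implicitly using that the $X_i$ are independent; the theorem statement lists them as separate random variables following exponential laws, and independence is the natural standing assumption in this Markov-chain setting, where each inter-contact clock runs independently.

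Next I would substitute the exponential tail $\Pr(X_i > z) = e^{-\lambda_{X_i} z}$ for $z \geq 0$ into the product. The exponents add, so the product collapses into a single exponential:
\begin{equation*}
\Pr(Z > z) = \prod_{i=1}^{n} e^{-\lambda_{X_i} z} = e^{-\left(\sum_{i=1}^{n} \lambda_{X_i}\right) z}.
\end{equation*}
This is exactly the tail probability of an exponential random variable with parameter $\lambda_Z = \sum_{i=1}^{n} \lambda_{X_i}$. To finish, I would pass from the tail to the distribution function via $F_Z(z) = 1 - \Pr(Z > z)$ and note that a random variable whose survival function is $e^{-\lambda_Z z}$ on $[0,\infty)$ is, by definition (or by uniqueness of the distribution determined by its CDF), exponentially distributed with parameter $\lambda_Z$, which establishes the claim.

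I do not anticipate a genuine obstacle, since this is a standard result; the only point requiring care is making the independence hypothesis explicit, as it is the sole ingredient beyond the definition of the exponential law and it is what licenses the factorisation of the joint tail probability into the product of marginals.
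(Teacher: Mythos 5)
Your proof is correct, and it is the standard argument: the paper itself states Theorem~\ref{min_n_exp_theo} as a classical fact and provides no proof at all, so there is nothing to diverge from. Your derivation via the factorisation of the tail probability $\Pr(Z>z)=\prod_i \Pr(X_i>z)=e^{-(\sum_i \lambda_{X_i})z}$ is exactly the right route, and you are right to flag that independence of the $X_i$ is the one hypothesis the theorem statement leaves implicit but that the argument genuinely requires; in the paper's setting it is justified because the inter-contact clocks of distinct node pairs are modelled as independent exponential variables.
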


There is two type of transitions: 
\begin{itemize}
	\item transition from one state to the absorbing state;
	\item transition from one state to another one.
\end{itemize}

The expression of the transition parameter between one state and the absorbing state depends on the number of nodes that have a copy of the message. We denote this number: $n_p$. Each of these $n_p$ nodes can join the destination. The destination is reached as soon as one of these $n_p$ nodes is in contact with it. Thus, the law of the transition is given by the minimum of $n_p$ exponential laws of parameter $\lambda$ which is $n_p\lambda$. We can differentiate two cases: either the source can be in contact with the destination (WDC: with direct contact) or can not (NDC: no direct contact). Nevertheless, as the source always keeps at least one copy of the message, the transition parameter can be written as follows: \\
$ n_p\lambda $, WDC  or $ (n_p-1)\lambda $, NDC.

To compute a normal transition, we first have to focus on the partition. Indeed, copies repartition corresponds to: 
$$L=2^k=\sum_{j=0}^k a_j 2^j$$ 
where $a_j$ represents the number of nodes that have $2^j$ copies of the message. This partition can also be written as a vector: 
$$(a_j)_{j\in\{0,..,k\}}$$
We consider $(a_j)_{j\in\{0,..,k\}}$ and $(b_j)_{j\in\{0,..,k\}}$ two repartition of copies, respectively of states A and B. We suppose that we transit from A to B when a node with $2^{m}$ copies is in contact with a node with no copies. The relationship between A and B can be written as follows: 
$$b_{m}=a_{m}-1 \text{ and } b_{m-1}=a_{m}+2, m \in\{1,..,n\}$$ 
with $m \geqslant 1$ since a node with one copy can forward this last remaining copy only to the destination (i.e. the corresponding state is the absorbing state). 
Keeping the previous notations, we can express $n_p$ (the number of node that have a copy of the message) as follows: 
$$n_p=\sum_{j=0}^k a_j$$ 
The transition between states A and B is done because a node with $2^{m}$ copies gives to another node $2^{m-1}$ copies. This node can give these copies to $N - n_p - 1$ different nodes since we do not consider the destination (which is a particular state in the chain). In practice, it gives these copies to the first one met. Thus, the law of the transition corresponds to the minimum of $N-n_p - 1$ exponential laws of parameter $\lambda$. Moreover, to make the transition from A to B, only one node among $a_{m}$ nodes must give half of its copies. Thus, the law of the transition corresponds to the minimum of $a_{m}(N-n_p - 1)$ exponential laws of parameter $\lambda$. As a consequence the transition
parameter is $a_{m}(N-n_p - 1)\lambda$.

All transitions parameters have to be positive. If $L \geqslant N$, some states are unreachable, become senseless and should be removed.

The Markov chain is now built and complete since we have the number of states in the chain and the literal expression of all transitions. The second phase consists in applying the first hitting time theorem \cite{hitting} between the initial state (where the source has all the message copies) and the absorbing state (corresponding to the delivery of the message) in order to obtain the delay distribution law.
%We do not provide any literal expression of this last phase but it is easy to implement in matlab for example.

%\subsection{Markov Chain example}
\subsection{Practical examples and simulations}
\label{sec:simu_homo}
In this section, we give a representation of some Markov chains. We present complete Markov chains for $L=4$ (Figure \ref{homo_L4}), $L=8$ (Figure \ref{homo_L8}) and $L=16$ (Figure \ref{homo_L16}). These three chains correspond to a NDC case. This means there is no transition between the first state and the absorbing state.

\begin{figure}[ht!] 
	\centering
	\subfigure[BSW with L=4, homogeneous case]{\includegraphics[width=0.8\columnwidth]{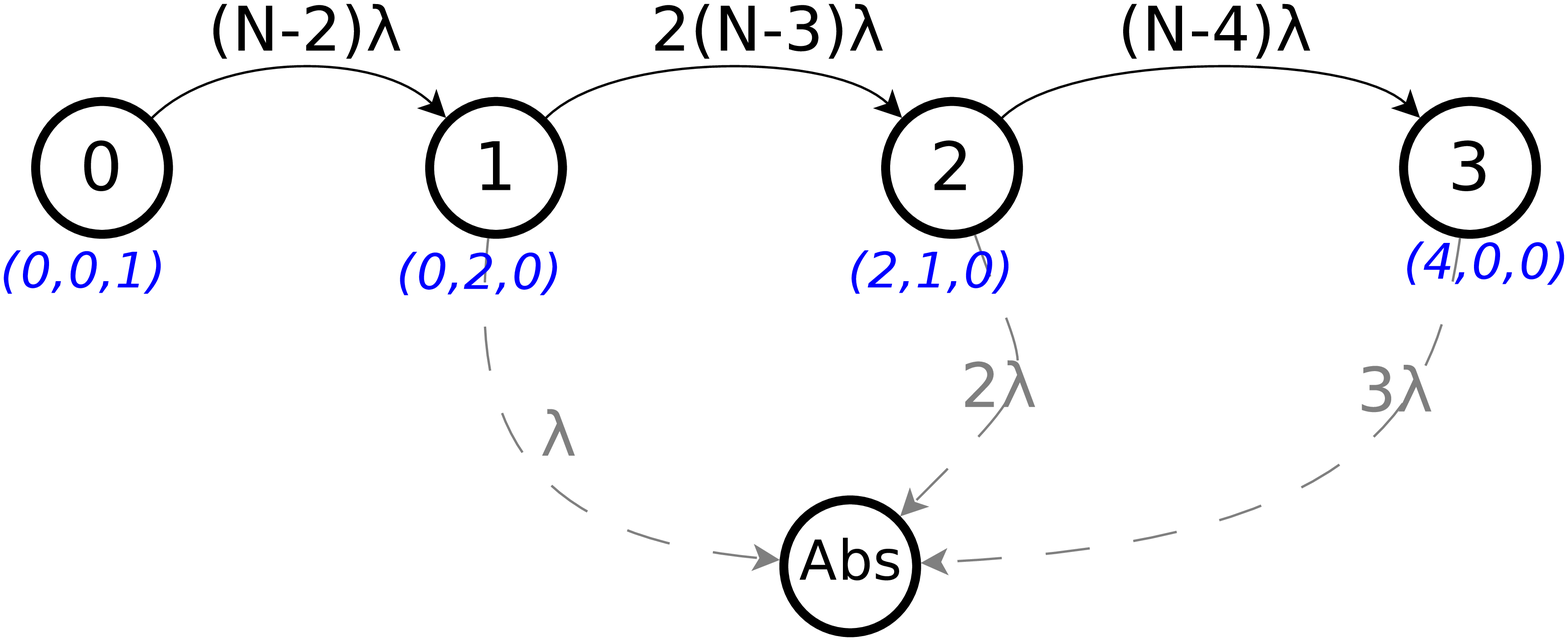}\label{homo_L4}}
	\subfigure[BSW with L=8, homogeneous case]{\includegraphics[width=0.98\columnwidth]{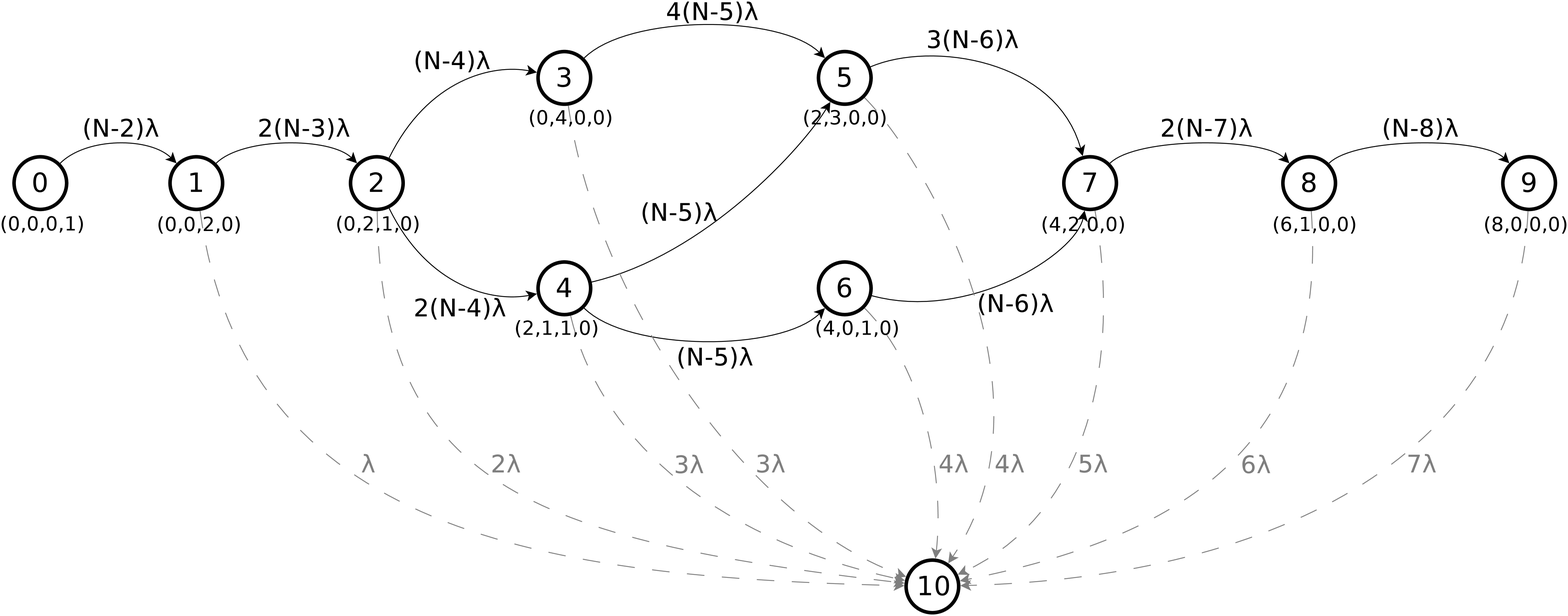}\label{homo_L8}}
	\caption{Example of Markov Chains for homogeneous network with $L=4$ and $L=8$ (the corresponding repartition is indicated inside the states)}
	\label{Markov_chains}
	
\end{figure} 

\begin{figure*}[ht!] 
	\centering
	\includegraphics[width=1.98\columnwidth]{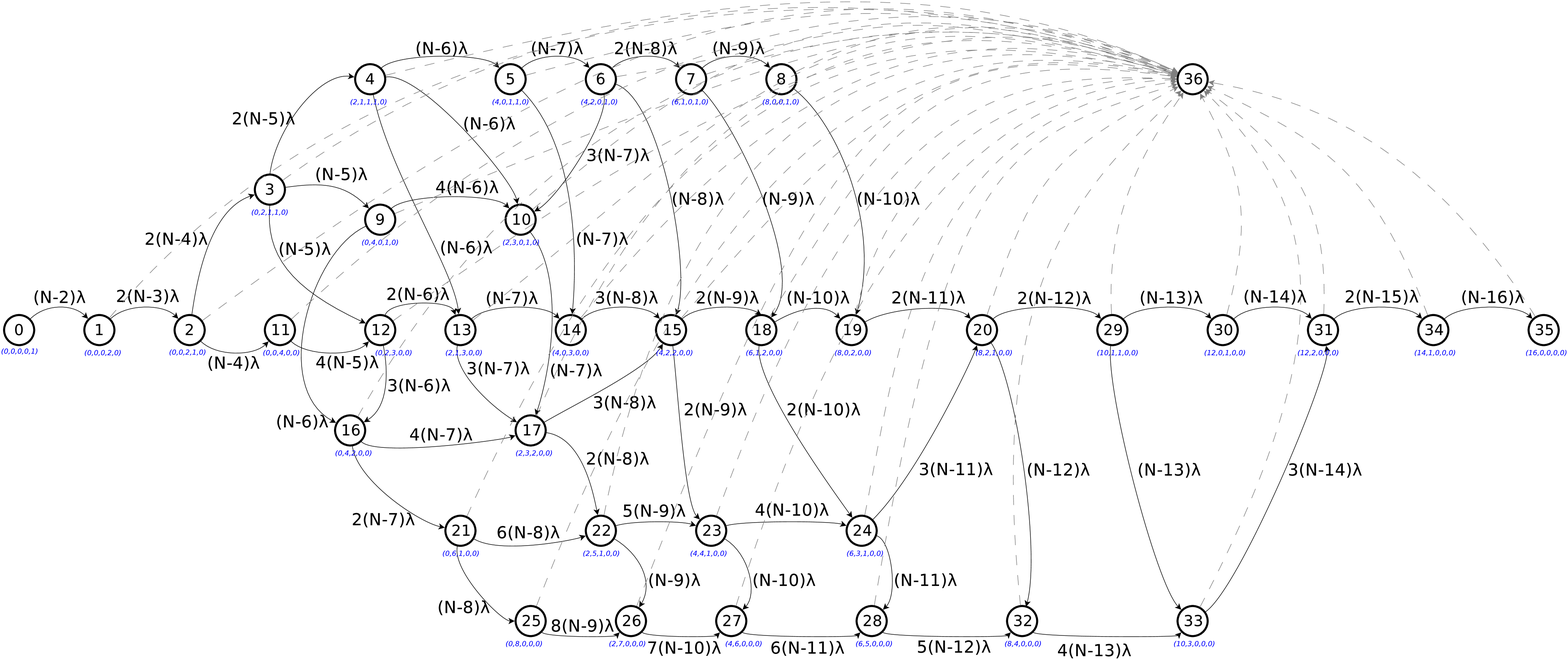}
	\caption{Example of Markov Chain for homogeneous network with $L=16$ (the corresponding repartition is indicated inside the states)}
	\label{homo_L16}
	
\end{figure*} 	

We use previous Markov chains definition to validate our model for different values of $L$ and $N$.
Table \ref{tab:bsw_exp} summarizes the different cases evaluated and gives the main network parameters.

\begin{table}[h]
	\caption{The different cases simulated and compared to the analytical results obtained with the model.}
	\begin{center}
	\begin{tabular}{|M{1.5cm}|M{1.5cm}|M{1.5cm}|M{1.5cm}|} 

		\hline  Case  &  $L$ &  $N$ & $\lambda$ \\
		\hline  \#1 &  4 &  6 &  50 \\
		\hline  \#2 &  4 &  20 &  200 \\
		\hline  \#3 &  8 &  20 &  200 \\
		\hline  \#4 &  16 &  20 &  200 \\
		\hline
		\end{tabular}
		
	\label{tab:bsw_exp}
	\end{center}
\end{table}

We use The ONE simulator \cite{theone} to perform our simulations. To evaluate $D$, the random variable corresponding to the end-to-end delay of messages, we first create a contact trace file of several millions of seconds with correct parameters of inter-contact laws. Using this file, the simulation consists in the sending of $N_e$ messages by the source. Once a message is created, the diffusion process starts. The messages generation is sufficiently spaced to ensure that each message transmitted from a source to a destination is an independent event. In practice, we choose the delay between two messages sending greater than $\lambda$. Thus, we observe $N_e$ independent events of the random variable $D$. In all our cases, this permits to accurately evaluate the distribution of $D$. However, it is easy to increment the accuracy of this evaluation by increasing the number of events observed. In our experiments, $N_e$ is ranging from $2000$ to $10000$.

Theoretical results have been obtained using Matlab. Figure \ref{homo_BSW_20n_result}, presents both simulation and theoretical results for the four cases described in Table \ref{tab:bsw_exp}. This figure gives the results for a 20-node network with a replication factor ranging from $4$ to $16$. We observe that the results obtained by our model correctly fits the corresponding simulation.

\begin{figure}[ht!] 
	\centering
	\includegraphics[width=0.98\columnwidth]{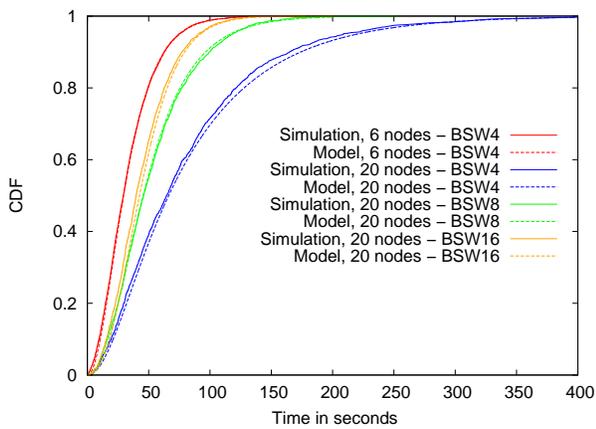}
	\caption{Comparison of the results obtained by simulation and our model in different cases for homogeneous network.}
	\label{homo_BSW_20n_result}
	
\end{figure} 

Following these results, we now propose to extend this model to the heterogeneous case.

\section{Extension of BSW model to heterogeneous networks}
\label{sec:BSW_model_hetero}
As explained in Section \ref{sec:BSW_model_homo}, in a homogeneous network, there is no discrimination between nodes with a given number of copies. Indeed, if two nodes have $4$ copies of a message, it does not matter to distinguish them as this is not taken into account for the computation of the transition in the Markov chain. On the contrary, in a heterogeneous network, we need to distinguish these nodes to compute the transitions and the partitions of the network. A partition where two nodes have $4$ copies of a message is not the same as each node can have different inter-contact law with all the other nodes. The problem is now to build a Markov chain that takes into account this new state. Obviously, the number of states in the chain is going to increase and depends on $N$. However, this model allows to assess the delay distribution of BSW routing protocol in any intermittently connected networks where nodes have contacts only with a subset of other nodes. 

\subsection{Markov Chain for heterogeneous cases}

This new Markov chain can be seen as a generalization of the previous one proposed for homogeneous case. Instead of a vector used to represent the copies repartition, we now use a matrix. Basically, each line of this matrix represents a node of the network and each column represents a number of copies in the same way as the vector in the previous part. We denote $R=(r_{i,j})_{1\leqslant i \leqslant N-1 ; 1\leqslant j \leqslant k+1}$ the copies repartition. $R$ has only $N-1$ lines as the destination is not considered in the repartition. If we consider a vector $V$ defined as follows: $V=(v_{i})_{1\leqslant i \leqslant k+1}$ with $v_i=\Sigma_{m=1}^{N-1}r_{m,i}$, this vector can be seen as a repartition of copies in a homogeneous network. As a result, the heterogeneous Markov chain corresponds to an extension of the homogeneous one which consists in splitting homogeneous states in several part to allow nodes discrimination. Transitions from one given state to the absorbing state are computed in the same way as in the homogeneous case while no computation is needed for the other transitions. 

The number of links denoted $n_l$, that starts from a given state in the heterogeneous case is equal to:
$$\sum_{j=1}^{k}(N-1-n_p)n_r(j)$$
with $n_p$ the number of relays that have a copy and $n_r(j)$ the number of relays that have $2^j$ copies. The exact number of states, which is not trivial to obtain, is computed with Matlab. 
%It is a combinatorial computation that do not present a high interest. In fact, in practise we use a soft that finds all states and transitions by applying the previously described relationships.
To illustrate how the problem is finally solved, we give an example of the Markov chain obtained for the case $N=5$ and $L=2^2$ in Figure \ref{fig_hetero_N5_L4} with direct contact.
\begin{figure*}[ht!] 
		\begin{center}	
		\includegraphics[width=1.5\columnwidth]{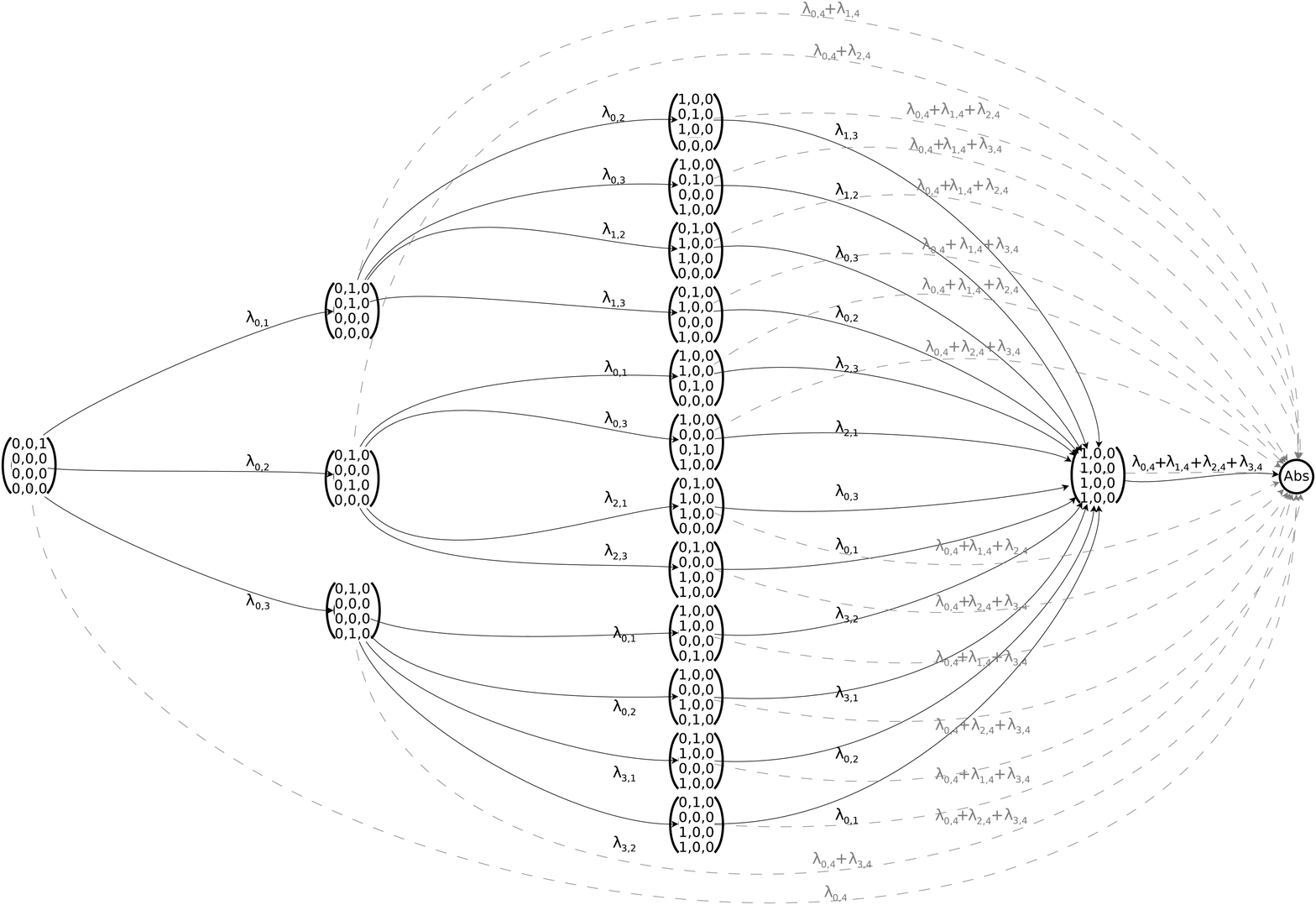} 
		%network_pres.eps} 
		\caption{A practical example of chain building in heterogeneous case.} 
		\label{fig_hetero_N5_L4} 
		\end{center}
\end{figure*} 
As a potential application example of the previous formula, we consider the second state of the chain: 
\tiny$ \begin{pmatrix}
	0&1&0 \\
	0&1&0 \\
	0&0&0 \\
	0&0&0 \\
\end{pmatrix}$ \normalsize.
There is only one kind of transition which is a transition from a node with two copies ($n_r(2)=0$). Here, $n_p=2$, $n_r(1)=2$ so $n_l=4$. This means that this particular state generates four different other states (as shown in Figure \ref{fig_hetero_N5_L4}).

In the case of $L=4$, we can give a literal expression of the number of states in the chain based on Figure \ref{fig_hetero_N5_L4}. Each level in the chain corresponds to the number of nodes that have a message. A state of the second level is a state where the source has two copies and one node among the $N-2$ remaining nodes. There is $\binom{N-2}{1}$ different possible states. A state of the third level is a state where a node has two copies and two nodes have one copy. Thus, there is $(N-1).\binom{N-2}{2}$ different possible states. A state of the last level is a state where four nodes have one copy, but in all states the source will have one copy. There is $\binom{N-2}{3}$ different states for this last level. Finally, for $L=4$, the number of states is given by: 
\begin{equation}
2 + \frac{N-2}{6}(6+(N-3)(4 N-7)) 
\label{eqL4}
\end{equation}
%This formula (\ref{eqL4}) gives the maximum number of states of the chain for $L=4$. 
Note that if some nodes are never in contact, some transitions are not possible and some states are unreachable.
 	
For $N=5$ or $N=10$ the chain has respectively $18$ or $318$ states. We observe that this number fastly increases as a function of $N$. This trend will be even more significant when $L$ also increases. However, in a heterogeneous case which fairly represent a real case, many transitions will be null since some nodes will never meet some other ones. As a result, the matrix that represents the Markov chain has a large dimension but remains very sparse and can be computed. We have developed an algorithm to compute the states and the transitions between them.

\subsection{Practical example of heterogeneous cases}

In this section, we present three experiments with heterogeneous networks.

\subsubsection{Case \#1}
In this first example, we take a simple network composed by five nodes with $L=4$. We also set $\lambda_{1,2}=\lambda_1$, $\lambda_{1,3}=\lambda_2$, $\lambda_{1,4}=\lambda_3$.  All other parameters are equal to $\lambda$ and we suppose there is no direct contact.
The Markov chain is the same that the one presented in Figure \ref{fig_hetero_N5_L4} with these corresponding values of $\lambda_{i,j}$

We compare both simulation and theoretical results obtained with this model. For the experiment, we choose $\lambda_1=100$, $\lambda_2=200$, $=\lambda_3=500$ and $\lambda=200$. Simulation are driven as explained in Section \ref{sec:simu_homo} except that we choose a delay between two messages at least as long as the largest parameter of the inter-contact laws. Results are presented in Figure \ref{hetero_result}.

\begin{figure}[ht!] 
		\begin{center}	
		\includegraphics[width=0.85\columnwidth]{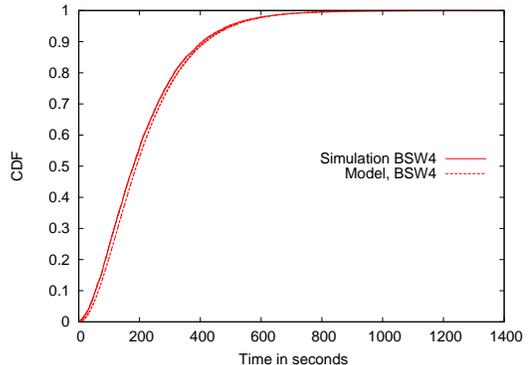} 
		%network_pres.eps} 
		\caption{Results obtained with case \#1.}%Comparison of simulation and model on a $5$ nodes heterogeneous network using BSW routing protocol (case $1$).} 
		\label{hetero_result} 
		\end{center}
\end{figure} 

\subsubsection{Case \#2}
In this second heterogeneous case, we approach a more realistic scenario. Indeed, we choose a set of $12$ nodes. Each node has an immediate number of neighbours (called \textit{diversity} in the following) ranging from 2 to 8. 
The parameter of each inter-contact law is randomly set between $200$ and $1200$ seconds. The chosen network is the sub-network made of the $12$ first nodes of the $20$-node network presented in Figure \ref{graph_20n}. 
We compare the theoretical and simulated end-to-end delay distributions for $L=4$ and $L=8$. Results are presented in Figure \ref{results_20n}.

\begin{figure}[ht!] 
		\begin{center}	
		\includegraphics[width=0.85\columnwidth]{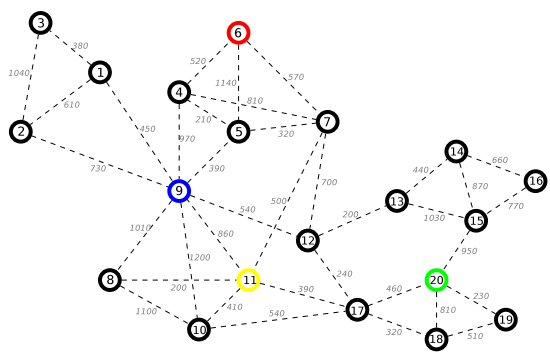} 
		%network_pres.eps} 
		\caption{$20$-node network representation.} 
		\label{graph_20n} 
		\end{center}
\end{figure}

\subsubsection{Case \#3}
This last heterogeneous case consider the whole $20$-node network presented in Figure \ref{graph_20n}. Each node still has a diversity ranging from $2$ to $8$ and inter-contact law parameter is also represented in Figure \ref{graph_20n}.
We compare theoretical and simulation results of end-to-end delay distribution for $L=4$ and $L=8$ in this case. %Results are presented in Figure \ref{results_20n}.

Figure \ref{results_20n} presents the results for both cases (i.e. cases \#2 and \#3). Solid lines correspond to simulation results while dotted lines to the theoretical ones. 
We observe that the results obtained by our model fairly fit those obtained by simulation. Moreover, the third case illustrates that the model also captures the fact that 
the delivery ratio does not always reach 100\%. Indeed, the average delivery ratio in the third case is $18 \%$ for $L=4$ and $57 \%$ for $L=8$ which is accurately captured by the model.

\begin{figure}[ht!] 
		\begin{center}	
		\includegraphics[width=0.85\columnwidth]{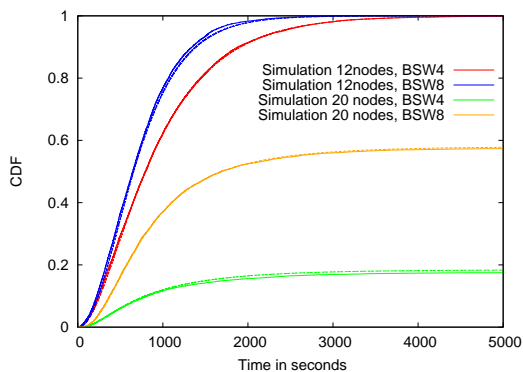} 
		%network_pres.eps} 
		\caption{Results obtained for cases \#2 and \#3.}%.Comparison of simulation and model on a $12$-node heterogeneous network (case $2$) and a $20$-node heterogeneous network (case $3$) using BSW routing protocol with $L=4$ and $L=8$.} 
		\label{results_20n} 
		\end{center}
\end{figure}

% ###################################################################################################
\section{Conclusion}
\label{sec:conc}
In this article, we propose a model to assess the end-to-end delay in an intermittently connected network using Binary Spray and Wait routing protocol. Under the assumption of an exponential inter-contact time distribution, we give a Markov chain that represents the diffusion of message copies in the network. This representation allows to obtain the end-to-end delay $D$, as the solution function to the first hitting time theorem. The extended version of this model allows to deal with the case of heterogeneous networks. As explained in Section \ref{sec:BSW_model_hetero}, we give the rules to build a Markov chain using a contact matrix of the network. The end-to-end delay $D$ remains the solution of the first hitting time theorem. We drive a set of simulations that confirm the accuracy of the model. We also verified the accuracy of our model on more realistic cases, both a $12$-node and $20$-node heterogeneous networks. 

We do not present estimation in terms of computational efficiency of the model. However, the first implementation realised in this paper (available on the author's webpage) demonstrates a faster resolution twice or three times
faster than the corresponding simulation with The One for simulation involving 20 nodes in the heterogeneous cases. In a future work, we expect to drive sereval experiments to assess the exact cost in terms of computation.
We currently investigate the integration of VACCINE \cite{vaccine} inside the model in order to determine an average amount of buffer occupancy and an achievable throughput.

%We insist on the fact that even if we take the example of a large network, in a realistic context, each node has only a limited set of neighbours. As a results, this leads to a sparse matrix which does not present computational problems. %However, the only limit is computation time.

%As already discussed in Section \ref{sec:assumptions}, the model is built under some assumptions, mainly the exponential characteristic of the inter-contact law used. In a future work, we will consider a mix between exponential, power or uniform laws for inter-contact to evaluate the impact of these assumptions on the model. We suppose that the model will not perfectly fit the simulation. The second assumption is the absence of congestion. 
%In a future work, we expect to integrate congestion and/or network load in the model. 
% A first possibility could be to re-evaluate the inter-contact time. In fact if only one contact over two is usable we could set the inter-contact time of the model to twice the real inter-contact time. In the same manner if the congestion implies a drop rate of $x$ we could set the model inter-contact time $x\%$ higher. 

% ###################################################################################################

\section{Acknowledgements}

The authors would like to thank CNES and Thales Alenia Space for their support.

\bibliographystyle{plain}
\bibliography{biblio}

\end{document}